\documentclass[aps,pra,10pt,twocolumn,superscriptaddress,floatfix,nofootinbib,showpacs,longbibliography]{revtex4-2}

\usepackage{enumitem}

\usepackage{mathrsfs}
\usepackage[mathscr]{eucal}

\usepackage[utf8]{inputenc}
\usepackage[T1]{fontenc}     
\usepackage[british]{babel}  
\usepackage[sc,osf]{mathpazo}
\usepackage[colorlinks=true, citecolor=purple, urlcolor=blue]{hyperref}

\usepackage{fontawesome5}
\usepackage{eso-pic}
\usepackage{charter}
\usepackage[normalem]{ulem}

\usepackage{graphicx} 
\usepackage[babel]{microtype}  
\usepackage{amsmath,amssymb,amsthm,bm,amsfonts,mathrsfs,bbm} 
\usepackage{natbib}
\usepackage{xcolor,colortbl}
\usepackage{color}
\newcommand{\Tr}{\operatorname{Tr}}
\newcommand{\ketbra}[2]{|#1\rangle \langle #2|}
\newtheorem{theorem}{Theorem}
\newtheorem{definition}{Definition}

\newtheorem{lemma}{Lemma}

\def\01{\{0,1\}}
\newcommand{\norm}[1]{{\left\|{#1}\right\|}}

\newcommand{\E}{\mathbb{E}}

\providecommand{\ket}[1]{| #1{\rangle}}
\providecommand{\bra}[1]{\langle #1|}

\begin{document}

\title{Exponential Advantage of Multipartite Entanglement over Quantum Communication with Applications to Bounded-Storage Cryptography}

\author{Ananya Chakraborty}
\email{ananyaphys.c@gmail.com }
\affiliation{S. N. Bose National Centre for Basic Sciences, Block JD, Sector III, Salt Lake, Kolkata 700 106, India.}

\author{Manik Banik}
\affiliation{S. N. Bose National Centre for Basic Sciences, Block JD, Sector III, Salt Lake, Kolkata 700 106, India.}

\author{Ronald de Wolf}
\affiliation{QuSoft, CWI and University of Amsterdam, the Netherlands. }

\begin{abstract}
We establish an exponential communication advantage enabled by multipartite quantum entanglement. Building on the bipartite Hidden Matching problem, we introduce a communication task involving multiple spatially separated senders and a single receiver. We show that a shared Greenberger–Horne–Zeilinger state enables completion of this task using only logarithmically many bits of classical communication from each sender. In contrast, without preshared entanglement, any protocol achieving high success probability requires polynomial communication from at least one sender, even when \emph{quantum} communication is allowed. Thus, classical communication assisted by multipartite entanglement can be exponentially more powerful than quantum communication without preshared entanglement. As a cryptographic application, we construct a seeded two-source randomness extractor and establish an exponential separation between entangled and unentangled quantum side-information. Specifically, compromising the extractor with two unentangled quantum states storing information about the two sources, respectively, requires polynomial-size memory, whereas exponentially smaller quantum memory suffices in the presence of a small amount of shared entanglement. 

\end{abstract}

\maketitle	

\section{Introduction} 

Quantum entanglement, `\emph{the characteristic trait of quantum mechanics}' \cite{Schrodinger1935}, underlies some of the most profound phenomena in physics, including the Einstein-Podolsky-Rosen paradox, Schr\"{o}dinger steering, and Bell nonlocality \cite{Einstein1935,Bohr1935,Schrodinger1936,Bell1964,Bell1966}. Beyond its foundational significance, entanglement is a key resource for quantum information processing, enabling communication, computation, and cryptographic tasks that are impossible or substantially less efficient within classical frameworks \cite{Bennett1997,Steane1998,Bennett2000,Gisin2002,Gisin2007,Horodecki2009,Buhrman2010,Childs2010,Huang2026}. The communication advantages afforded by bipartite entanglement are now reasonably well understood \cite{Wilde2017}. By contrast, \emph{multipartite} entanglement possesses a far richer structure, including genuinely multipartite forms that cannot be reduced to collections of bipartite correlations \cite{Svetlichny1987,Bouwmeester1999,Dur2000}. Despite remarkable experimental advances in generating and controlling multipartite entangled states \cite{Riedel2010,Wang2018,Figgatt2019,Omran2019,Mooney2021}, our understanding of how such states enhance distributed information-processing tasks remains comparatively limited. Communication complexity provides a natural framework for investigating this question. In communication complexity tasks, spatially separated parties seek to evaluate a global property of distributed inputs while exchanging as little communication as possible \cite{Yao1979,Kushilevitz1996}. Quantum resources can dramatically reduce communication costs and, in several notable instances, yield exponential advantages over classical protocols \cite{Yao1993,Cleve1997,deWolf2002,Brassard2003(1)}. However, most known exponential separations are in bipartite settings \cite{Buhrman1998,Raz1999,Buhrman2001(1),BarYossef2004,Gavinsky2007,Kumar2019}, leaving open the question whether multipartite entanglement alone can provide comparable advantages in  communication tasks.

Here, we address this question by introducing a multipartite one-way communication task wherein multiple spatially separated senders receive classical inputs and a single receiver must produce an output satisfying a prescribed global relation. We show that a Greenberger-Horne-Zeilinger (GHZ) state \cite{Greenberger1990,Mermin1990,Pan2000} shared among all the parties enables success using only $O(\log n)$ bits of \emph{classical} communication from each sender. The protocol exploits the multipartite phase correlations of the GHZ state to encode the local inputs and recover the required global relation at the receiver. In contrast, any protocol achieving a high success probability without preshared entanglement requires polynomial communication from at least one sender. Remarkably, this lower bound continues to hold even when unrestricted \emph{quantum} communication from the senders to the receiver is permitted. Consequently, classical communication assisted by multipartite entanglement is exponentially more powerful than unrestricted quantum communication without preshared entanglement. Our result establishes a multipartite analogue of the communication separation previously established in the bipartite setting~\cite{Gavinsky2006}, thereby identifying multipartite entanglement as a resource for exponential communication savings.

Beyond communication complexity, our results have implications for bounded-storage cryptography with quantum side-information~\cite{Damgard2005}. The multipartite communication protocol developed here naturally gives rise to a seeded randomness extractor involving multiple independent weak sources (i.e., non-uniform but with some guaranteed min-entropy). The communication lower bound of our multipartite task translates directly into a lower bound on the quantum side-information required to compromise the resulting extractor. In particular, our construction shows that preshared entanglement exponentially enhances the power of quantum side-information: compromising the extractor requires polynomial-size unentangled quantum memory, whereas exponentially smaller entangled quantum memory suffices. Consequently, our results establish an exponential separation between entangled and unentangled quantum side-information.

\section{Multiparty communication complexity}

We now formally introduce the multipartite communication tasks underlying these results.

\emph{Multipartite Hidden Matching ($m\mathtt{HM}_n$).} The task $m\mathtt{HM}_n$ involves $m$ spatially separated senders $\{\mathrm{Alice}_r\}_{r=1}^{m}$ and a receiver Bob. Each sender receives an input string: ${\bf x}^r\in\{0,1\}^n$ for $\mathrm{Alice}_r$, where $n$ is even. Bob receives a matching $\mathrm{M}=\{(i_\ell,j_\ell)\,:\,\ell=1,\ldots,n/2\}$ from the set~$\mathscr{M}_{n/2}$ of all perfect matchings on the set $[n]=\{1,\ldots,n\}$. We will assume throughout for simplicity that $n$ is a power of~2. 

The communication model is one-way: each Alice may send a message to Bob, but no communication is allowed among the Alices or from Bob to any of the Alices. Bob must output a triple $\big\langle i_\ell, j_\ell,\bigoplus_{r=1}^{m}\bigl(x^r_{i_\ell}\oplus x^r_{j_\ell}\bigr)\big\rangle$, for some edge $(i_\ell,j_\ell)\in \mathrm{M}$ (see Fig.~\ref{Fig}). Thus, Bob reports one edge of the matching together with the parity, across all senders, of the corresponding endpoint bits. 

For $m=1$, this is the Hidden Matching problem introduced in~\cite{BarYossef2004}. Alice, when allowed to communicate qubits to Bob, encodes her input ${\bf x}=x_1\ldots x_n$ into the $\log n$-qubit state 
\begin{align}\label{hm-Bar}
\ket{\psi_{\bf x}}=U_{\bf x}\left(\tfrac{1}{\sqrt n}\sum_{i=1}^{n}\ket{i}\right),\quad U_{\bf x}:=\sum_{k=1}^{n}(-1)^{x_k}\ket{k}\bra{k},    
\end{align}
and sends it to Bob. Given a perfect matching $\mathrm{M}$, Bob performs the projective measurement $\mathfrak{M}_{\mathrm M}=\{\mathrm P^{i_\ell j_\ell}=\ket{i_\ell}\bra{i_\ell}+\ket{j_\ell}\bra{j_\ell}:(i_\ell,j_\ell)\in\mathrm M\}$, which projects onto a uniformly random edge $(i_\ell,j_\ell)\in\mathrm M$ with resulting state being $\ket{\psi_{i_\ell j_\ell}}=\frac{1}{\sqrt2}\big[(-1)^{x_{i_\ell}}\ket{i_\ell}+(-1)^{x_{j_\ell}}\ket{j_\ell}\big]$. Finally, Bob measures in the basis $\{(\ket{i_\ell}\pm\ket{j_\ell})/\sqrt2\}$, which deterministically reveals the parity $x_{i_\ell}\oplus x_{j_\ell}$. Hence, the Hidden Matching relation $\mathtt{HM}_n$ is solved with perfect success using only $\log n$ qubits of communication.

\emph{Multipartite Boolean Hidden Matching ($m\mathtt{BHM}_n$).} 
Notably, $m\mathtt{HM}_n$ is a relational communication problem: for each input $({\bf x},\mathrm{M})$, there is a valid output for each of the edges of $\mathrm{M}$. We also consider a Boolean decision version of the task, denoted $m\mathtt{BHM}_n$, and defined as follows. A $1/4$-matching $\mathrm{M}$ is a partial matching consisting of $n/4$ disjoint pairs from $[n]$. In addition to such an $\mathrm{M}\in\mathscr{M}_{n/4}$, Bob receives a string ${\bf w}\in\{0,1\}^{n/4}$ satisfying the promise $\mathbb{M}\!\left(\bigoplus_{r=1}^{m}{\bf x}^r\right)\oplus{\bf b}^{n/4}={\bf w}$, where $\mathbb{M}$ is the $n/4\times n$ incidence matrix of $\mathrm{M}$, with one weight-2 row for each edge of $\mathrm{M}$, and ${\bf b}^{n/4}$ is the length-$n/4$ all-$b$ string. Equivalently, the promise ensures either ${\bf z}:=\mathbb{M}\!\left(\bigoplus_{r=1}^{m}{\bf x}^r\right)={\bf w}$ when $b=0$, or ${\bf z}=\overline{\bf w}$ when $b=1$. Bob's goal is to determine the value of the bit $b$, so this problem is now a partial Boolean function. The reason we are defining this with respect to a partial matching $\mathrm{M}$ rather than a perfect matching is that our main lower bound fails in the latter case,  as will become evident in the discussion following Theorem~\ref{th:mainlowerbound} (Appendix~\ref{ApprndixA}).

\begin{figure}[t!]
\centering
\includegraphics[width=1\linewidth]{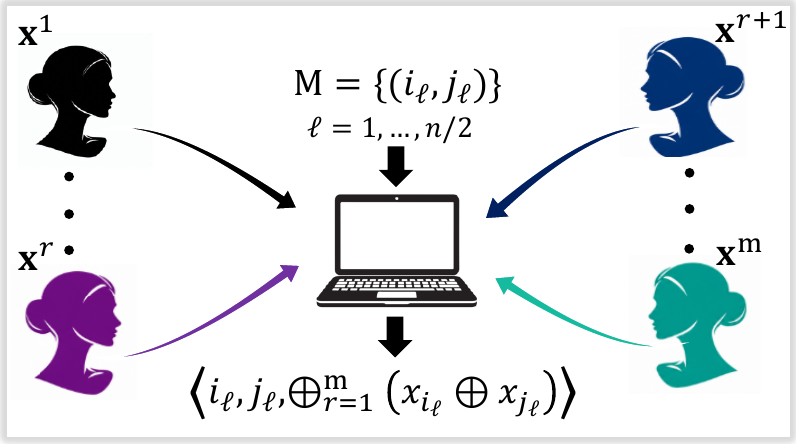}
\caption{{\bf Multipartite Hidden Matching task ($m\mathtt{HM}_n$).} It involves $m$ spatially separated senders (Alices), where the $r$th Alice receives a string ${\bf x}^r\in\{0,1\}^n$. Bob receives a perfect matching $\mathrm{M}\in\mathscr{M}_{n/2}$ on $[n]$. His goal is to output a triple $\big\langle i_\ell,j_\ell,\bigoplus_{r=1}^{m}(x^r_{i_\ell}\oplus x^r_{j_\ell})\big\rangle$ for some edge $(i_\ell,j_\ell)\in\mathrm{M}$. The Boolean variant $m\mathtt{BHM}_n$ is a related decision problem defined in the text.}
\label{Fig}
\vspace{-.25cm}
\end{figure}

Within the Simultaneous Message Passing (SMP) framework \cite{Babai1997}, a natural hierarchy of one-way communication models emerges, distinguished by the communication resources and shared correlations available to the parties:

\begin{itemize}[itemsep=-.05cm, topsep=2pt, leftmargin=.3cm]

\item \textit{Classical communication without shared resources:} The Alices send classical messages to Bob, and no shared randomness or entanglement is available. The corresponding communication complexity (i.e., worst-case total number of bits sent, minimized over all protocols solving the task) for perfect, error-free protocols is denoted by $C^{\parallel}$.

\item \textit{Classical communication with global shared randomness:} All parties have access to a common random string. The corresponding complexity is denoted by $C^{\parallel,\mathrm{GSR}}$ for error-free protocols.

\item \textit{Classical communication assisted by bipartite entanglement:} Each Alice shares independent bipartite entanglement with Bob, yielding complexity $C^{\parallel,\{\mathrm{Ent}_{A_iB}\}_{i=1}^{m}}$. If, additionally, bipartite entanglement is shared between every pair of Alices, the complexity is denoted by
$C^{\parallel,\{\mathrm{Ent}_{A_iB},\mathrm{Ent}_{A_iA_j}\}_{i,j=1}^{m}}$ for error-free protocols.

\item \textit{Classical communication assisted by genuine multipartite entanglement:} All parties share a genuine multipartite entangled state. The corresponding complexity is denoted $C^{\parallel,\mathrm{GEnt}}$ for perfect protocols.
\end{itemize}
``Error-free'' above means that the protocol must produce a correct output with probability~1, for each possible input.
We will also consider communication protocol that are allowed some error probability, at most $1/3$ on every possible input, and we will add a subscript $1/3$ below the $C$ to indicate that.

Replacing the classical channels from the Alices to Bob by quantum channels gives rise to the corresponding quantum communication complexities $Q^{\parallel}$, $Q^{\parallel,\mathrm{GSR}}$, $Q^{\parallel,\{\mathrm{Ent}_{A_iB}\}_{i=1}^{m}}$, $Q^{\parallel,\{\mathrm{Ent}_{A_iB},\mathrm{Ent}_{A_iA_j}\}_{i,j=1}^{m}}$, and $Q^{\parallel,\mathrm{GEnt}}$, with a subscript $1/3$ for their bounded-error versions. 

We begin by presenting efficient classical communication protocol for $m\mathtt{HM}_n$ and $m\mathtt{BHM}_n$ when a multipartite GHZ state is shared among all the parties.

\begin{theorem}\label{theo1}
For every $m\ge1$, $C^{\parallel,\mathrm{GEnt}}(m\mathtt{HM}_n)\le \log_2 n$ and $C^{\parallel,\mathrm{GEnt}}_{1/3}(m\mathtt{BHM}_n)\leq \log_2 n$.
\end{theorem}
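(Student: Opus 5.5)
The plan is to reduce both tasks to the single-sender protocol of Eq.~(\ref{hm-Bar}). We use an $n$-dimensional GHZ state to let the $m$ Alices jointly prepare, at Bob's side, the state $\ket{\psi_{\bf z}}$ with ${\bf z}:=\bigoplus_{r=1}^m {\bf x}^r$. Each Alice will then send only her local measurement outcome, which is $\log_2 n$ classical bits. The bound $\log_2 n$ in the statement is read as the cost per sender, as emphasized in the introduction; the total is $m\log_2 n$.

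\emph{Step 1 (protocol).} Write $n=2^k$ and identify $[n]$ with $\{0,1\}^k$. The parties share
\begin{align*}
\ket{\mathrm{GHZ}_n}=\tfrac{1}{\sqrt n}\textstyle\sum_{i\in\{0,1\}^k}\ket{i}_{A_1}\cdots\ket{i}_{A_m}\ket{i}_B .
\end{align*}
Each $\mathrm{Alice}_r$ applies $U_{{\bf x}^r}$ to her register. She then measures in the Hadamard basis $\{H^{\otimes k}\ket{s}\}_{s\in\{0,1\}^k}$, obtains $s_r$, and sends it to Bob. Since $\langle i|H^{\otimes k}|s\rangle=(-1)^{s\cdot i}/\sqrt n$, all outcomes are equally likely. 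Bob's conditional state is
\begin{align*}
\tfrac{1}{\sqrt n}\textstyle\sum_i (-1)^{\bigoplus_r x^r_i}\,(-1)^{(\bigoplus_r s_r)\cdot i}\ket{i}.
\end{align*}
To verify this, I would note that the phases $U_{{\bf x}^r}$ commute with the computational-basis correlations. Each Alice's projection then contributes an independent factor $(-1)^{s_r\cdot i}$ and a normalisation $n^{-1/2}$, and the product of these normalisations is absorbed when we condition on the outcomes.

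\emph{Step 2 (correction).} Bob applies the diagonal Pauli correction $\sum_i(-1)^{(\bigoplus_r s_r)\cdot i}\ket{i}\bra{i}$, which he can compute from the received messages. This leaves him with exactly $\ket{\psi_{\bf z}}$ of Eq.~(\ref{hm-Bar}).

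\emph{Step 3 (finishing the two tasks).} For $m\mathtt{HM}_n$, Bob runs the Hidden Matching measurement $\mathfrak{M}_{\mathrm M}$ followed by the $\{(\ket{i_\ell}\pm\ket{j_\ell})/\sqrt2\}$ measurement. This yields an edge $(i_\ell,j_\ell)$ together with $z_{i_\ell}\oplus z_{j_\ell}=\bigoplus_r(x^r_{i_\ell}\oplus x^r_{j_\ell})$, with probability~1, so the protocol is error-free.

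For $m\mathtt{BHM}_n$ with a $1/4$-matching, Bob measures with the $n/4$ edge projectors $\mathrm P^{i_\ell j_\ell}$ together with the complementary projector onto the $n/2$ unmatched indices. With probability $1/2$ he lands on some edge $\ell$, learns $z_{i_\ell}\oplus z_{j_\ell}$ exactly, and outputs $b=w_\ell\oplus z_{i_\ell}\oplus z_{j_\ell}$, which is correct by the promise. Otherwise he outputs a uniformly random bit. His success probability is therefore $1/2+1/4=3/4\ge 2/3$.

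\emph{Main obstacle.} The only nontrivial point is the phase bookkeeping in Step~1. We must check that local Hadamard measurements on a GHZ state act as ``phase teleportation'': the Alices' phase unitaries multiply into a single phase on Bob's register, up to a known Pauli-$Z$ correction. This is a short calculation.
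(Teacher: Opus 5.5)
Your proposal is correct and follows essentially the paper's proof. The ingredients are the same: a shared GHZ state, the local phase encoding $U_{{\bf x}^r}$, a local measurement in a basis conjugate to the computational one whose $\log_2 n$-bit outcome each Alice sends to Bob, a diagonal phase correction by Bob, and then the standard Hidden Matching measurement, with a random guess when the edge falls outside the $1/4$-matching, giving success probability $3/4$.

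The differences are cosmetic. You measure in the Hadamard basis over $\{0,1\}^k$ with a Pauli-$Z$ correction, where the paper uses the $\mathbb{Z}_n$ Fourier basis with correction $\sum_k\omega^{k\eta}\ket{k}\bra{k}$. You also measure the partial matching directly with a complementary projector, instead of first completing it to a perfect matching. Your per-sender reading of the $\log_2 n$ bound matches what the paper's protocol achieves, namely $\log_2 n$ bits from each Alice.
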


\begin{proof}
The $m$ senders $\{\mathrm{Alice}_r\}_{r=1}^{m}$ and Bob share the $(m+1)$-partite $n$-dimensional GHZ state $\ket{\mathcal G}_{m+1}=\frac{1}{\sqrt n}\sum_{k=1}^{n}(\bigotimes_{r=1}^{m}\ket{k}_{A_r})\ket{k}_B$, where each party holds $\log_2 n$ qubits. Upon receiving ${\bf x}^r=x^r_1\ldots x^r_n$, Alice$_r$ applies the unitary $U^{(r)}_{\bf x}=\sum_{k=1}^{n}(-1)^{x_k^r}\ket{k}\bra{k}$ on her share of the GHZ state, thereby transforming the shared state into $\ket{\mathcal G({\bf x}^1,\ldots,{\bf x}^m)}_{m+1}=\frac{1}{\sqrt n}\sum_{k=1}^{n}(-1)^{p_k}(\bigotimes_{r=1}^{m}\ket{k}_{A_r})\ket{k}_B$, where $p_k=\bigoplus_{r=1}^{m}x_k^r$. Each Alice then measures her subsystem in the Fourier basis $\mathbb{FB}\equiv\big\{\ket{f_a}:=\frac{1}{\sqrt n}\sum_{k=1}^{n}\omega^{ka}\ket{k}\big\}_{a=1}^{n}$, $\omega:=e^{2\pi i/n}$; and communicates the $\log_2 n$-bit outcome ($o_r$ for the $r$th Alice) to Bob. Conditioned on the outcome tuple $\vec{o}=(o_1,\ldots,o_{m})$, Bob's marginal state is 
\begin{align}
\ket{\psi(\vec{o})}_B=\frac{1}{\sqrt n}\sum_{k=1}^{n}(-1)^{p_k}\omega^{-k\eta}\ket{k}_B,~\eta:=\sum_{r=1}^{m}o_r.
\end{align}
Applying the unitary $U_B:=\sum_{k=1}^{n}\omega^{k\eta}\ket{k}\bra{k}$, Bob's state becomes
\begin{align}\label{hm-multi}
\ket{\psi}_B
=\frac{1}{\sqrt n}\sum_{k=1}^{n}
(-1)^{p_k}\ket{k}_B,
\end{align}
which is exactly the Hidden Matching encoding of the parity string ${\bf p}=(p_1,\ldots,p_n)$ [see Eq.~\eqref{hm-Bar}]. Bob then follows the same measurement procedure as in the bipartite protocol, obtaining $p_{i_\ell}\oplus p_{j_\ell}=\bigoplus_{r=1}^{m}(x_{i_\ell}^r\oplus x_{j_\ell}^r)$.

For the Boolean version $m\mathtt{BHM}_n$, Bob completes the $1/4$-matching $\mathrm{M}$ to a perfect matching (adding $n/4$ edges in some arbitrary way), and measures the state \eqref{hm-multi} as before. With probability $1/2$ the obtained edge $(i_\ell,j_\ell)$ is in $\mathrm{M}$. Bob computes $z_\ell=\bigoplus_{r=1}^{m}\bigl(x^r_{i_\ell}\oplus x^r_{j_\ell}\bigr)$ from the measurement outcome and compares it with $w_\ell$. By the promise, $z_\ell=w_\ell$ iff $b=0$, and $z_\ell\neq w_\ell$ iff $b=1$. Hence, if indeed $(i_\ell,j_\ell)\in\mathrm{M}$, then Bob recovers $b$ with certainty. 
If the obtained $(i_\ell,j_\ell)$ was not in $\mathrm{M}$ then Bob outputs a random coin flip. His overall error probability is now $1/4<1/3$.
\end{proof}

\emph{Remark:} A notable distinction between the bipartite and multipartite settings concerns information leakage. In the bipartite task, let's say with uniform input distribution, any successful protocol necessarily reveals partial information about Alice's input to Bob, namely the parity of the two input bits associated with the reported edge of the matching. By contrast, in multipartite protocol of Theorem~\ref{theo1}, Bob learns only the global parity $\bigoplus_{r=1}^{m}\bigl(x^r_{i_\ell}\oplus x^r_{j_\ell}\bigr)$, while gaining no information about the input string of any individual sender, or even about the joint input strings of any $m-1$ of the senders. Thus, the protocol achieves perfect success while remaining completely oblivious to each sender's local input. 

\medskip

We now turn to communication lower bounds in the absence of preshared entanglement. By fixing the inputs for $m-1$ senders to the all-0 string $0^n$, we immediately get a lower bound for the $m$-partite task from the bipartite task:

\begin{lemma}\label{lemma1}
For every $m\ge 1$, $C^{\parallel,\mathrm{GSR}}(m\mathtt{HM}_n)\ge C^{\parallel,\mathrm{GSR}}(\mathtt{HM_n})$ and $C^{\parallel,\mathrm{GSR}}(m\mathtt{BHM}_n)\ge C^{\parallel,\mathrm{SR}}(\mathtt{BHM_n})$, and similarly for the bounded-error versions. 
\end{lemma}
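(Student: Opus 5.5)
The argument is a direct embedding of the bipartite instance, obtained by fixing the inputs of $m-1$ senders to $0^n$ as indicated before the statement. Take any protocol $\mathcal P$ for $m\mathtt{HM}_n$ in the model with global shared randomness. We build a protocol $\mathcal P'$ for $\mathtt{HM}_n$ between a single Alice, who receives ${\bf x}\in\{0,1\}^n$, and Bob, who receives $\mathrm M\in\mathscr M_{n/2}$, and show that its cost is at most that of $\mathcal P$.

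In $\mathcal P'$, Alice plays $\mathrm{Alice}_1$ with ${\bf x}^1={\bf x}$. Bob plays Bob of $\mathcal P$ and, in addition, simulates $\mathrm{Alice}_2,\ldots,\mathrm{Alice}_m$ on the fixed inputs ${\bf x}^r=0^n$. He can do this locally, with no communication, for three reasons:
\begin{itemize}
\item these senders' inputs are known constants;
\item the shared random string is available to Bob as well;
\item the senders never communicate with each other in the one-way SMP model.
\end{itemize}
So Bob computes their messages himself, receives $\mathrm{Alice}_1$'s message from Alice, and runs Bob's decision procedure from $\mathcal P$.

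Since $\bigoplus_r x^r_k = x_k$ for every $k$, the output triple $\langle i_\ell,j_\ell,\bigoplus_r(x^r_{i_\ell}\oplus x^r_{j_\ell})\rangle$ equals $\langle i_\ell,j_\ell,x_{i_\ell}\oplus x_{j_\ell}\rangle$. This is a valid $\mathtt{HM}_n$ output. Correctness carries over input by input, so a protocol that is error-free, or has error at most $1/3$, on every input of $m\mathtt{HM}_n$ has the same guarantee on every input of $\mathtt{HM}_n$.

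For the cost, the only communication in $\mathcal P'$ is $\mathrm{Alice}_1$'s message. Its length is at most the total communication of $\mathcal P$, which is the quantity $C^{\parallel,\mathrm{GSR}}$ measures. Taking the minimum over protocols $\mathcal P$ gives $C^{\parallel,\mathrm{GSR}}(\mathtt{HM}_n)\le C^{\parallel,\mathrm{GSR}}(m\mathtt{HM}_n)$. The global shared randomness becomes ordinary shared randomness between Alice and Bob; this is the SR model named on the right-hand side of the Boolean inequality, and for the relational inequality it is the bipartite GSR model.

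The Boolean case uses the same embedding. An instance $({\bf x},\mathrm M,{\bf w})$ of $\mathtt{BHM}_n$ satisfying $\mathbb M{\bf x}\oplus{\bf b}^{n/4}={\bf w}$ maps to the $m\mathtt{BHM}_n$ instance $({\bf x},0^n,\ldots,0^n;\mathrm M,{\bf w})$. This instance satisfies the multipartite promise with the same bit $b$, because $\bigoplus_r{\bf x}^r={\bf x}$. Hence the simulated protocol outputs $b$ with the same error guarantee.

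The only step needing care is checking that Bob really can simulate the fixed-input senders. Shared randomness, the absence of sender-to-sender communication, and the promise mapping all hold by construction, so I expect no genuine obstacle.
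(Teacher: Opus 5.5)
Your proposal is correct and follows the paper's argument. The paper justifies the lemma in one line, by fixing the inputs of $m-1$ senders to $0^n$. You fill in the details it leaves implicit: Bob locally simulates the fixed-input senders using the shared randomness, the identity $\bigoplus_r {\bf x}^r={\bf x}$ preserves both the output relation and the $\mathtt{BHM}_n$ promise, and only one sender's message is counted in the cost.
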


Using the known lower bounds for the bipartite (i.e., $m=1$) hidden matching problems \cite{BarYossef2004,Buhrman2012}, Lemma~\ref{lemma1} implies $C_{1/3}^{\parallel,\mathrm{GSR}}(m\mathtt{HM}_n)\ge\Omega(\sqrt n)$ and $C^{\parallel,\mathrm{GSR}}_{1/3}(m\mathtt{BHM}_n)\ge\Omega(\sqrt{n})$~\cite{Gavinsky2007} for bounded-error protocols. 

We next strengthen these lower bounds by allowing unrestricted \emph{quantum} communication while still forbidding preshared entanglement, namely in the communication model $Q^{\parallel,\mathrm{GSR}}$.     

\begin{theorem}\label{theo2}
For all $m\ge 2$ the bounded-error communication complexity  $Q_{1/3}^{\parallel,\mathrm{GSR}}(m\mathtt{BHM}_n)=\Omega(\sqrt{n})$, in fact at least one sender has to send $\Omega(\sqrt{n})$ qubits.
\end{theorem}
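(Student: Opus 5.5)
The plan is a Fourier-analytic argument in the style of Gavinsky et al.\ and Ben-Aroya--Regev--de Wolf, exploiting that without preshared entanglement Bob holds a \emph{product} state. It suffices to treat $m=2$: for $m>2$, fix the inputs of Alice$_3,\dots,$Alice$_m$ to $0^n$, as in Lemma~\ref{lemma1}. Shared randomness can be fixed by averaging against a hard input distribution (Yao's principle), so we may assume a deterministic quantum protocol. Alice$_1$ sends a $q_1$-qubit state $\rho_{\bf x}$ and Alice$_2$ sends a $q_2$-qubit state $\sigma_{\bf y}$, and Bob applies a two-outcome measurement $\{\Pi_{\mathrm M,{\bf w}},I-\Pi_{\mathrm M,{\bf w}}\}$ to $\rho_{\bf x}\otimes\sigma_{\bf y}$. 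Write $q=\max(q_1,q_2)$; the goal is to show that $q=o(\sqrt n)$ forces bias $o(1)$.

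\textbf{Hard distribution.} Take ${\bf x},{\bf y}$ uniform and independent, $\mathrm M$ a uniform $1/4$-matching, $b$ uniform, and ${\bf w}=\mathbb M({\bf x}\oplus{\bf y})\oplus{\bf b}^{n/4}$. The protocol distinguishes $b=0$ from $b=1$ with constant advantage, so
\begin{equation*}
\E_{\mathrm M,{\bf w}}\Bigl|\E_{{\bf x},{\bf y}}\bigl[\Tr\bigl(\Pi_{\mathrm M,{\bf w}}(\rho_{\bf x}\otimes\sigma_{\bf y})\bigr)\bigl(\mathbb 1[\mathbb M({\bf x}\oplus{\bf y})={\bf w}]-\mathbb 1[\mathbb M({\bf x}\oplus{\bf y})=\overline{\bf w}]\bigr)\bigr]\Bigr|
\end{equation*}
is $\Omega(2^{-n/4})$ after normalisation. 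We expand the indicator difference in characters of ${\bf u}=\mathbb M({\bf x}\oplus{\bf y})$. The difference $\mathbb 1[{\bf u}={\bf w}]-\mathbb 1[{\bf u}=\overline{\bf w}]$ contains only characters $\chi_T({\bf u})$ with $|T|$ odd, where $T\subseteq$ edges of $\mathrm M$. Each such character equals $\chi_{S}({\bf x})\chi_S({\bf y})$, with $S=\bigcup T\subseteq[n]$ of size $2|T|$. The same set $S$ appears on both sides, which is where the XOR structure across the two senders enters.

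\textbf{Bounding the bias.} Using matrix-valued Fourier coefficients $\hat\rho(S)=\E_{\bf x}\rho_{\bf x}\chi_S({\bf x})$ and $\hat\sigma(S)$, the inner expectation becomes $\sum_{T}\pm\Tr\bigl(\Pi(\hat\rho(S_T)\otimes\hat\sigma(S_T))\bigr)$. For each $T$, Hölder gives $|\Tr(\Pi(A\otimes B))|\le\|A\|_1\|B\|_1$. Applying Cauchy--Schwarz over $S$, the bias is at most a sum over $k=|T|$ of
\begin{equation*}
\Pr_{\mathrm M}\bigl[S\text{ is a union of edges of }\mathrm M\bigr]\text{-weighted}\ \Bigl(\sum_{|S|=2k}\|\hat\rho(S)\|_1^2\Bigr)^{1/2}\Bigl(\sum_{|S|=2k}\|\hat\sigma(S)\|_1^2\Bigr)^{1/2}.
\end{equation*}
The weights should be organised as in Gavinsky et al.: averaging over $\mathrm M$ and ${\bf w}$ reduces each term to a count of pairs $(S,\mathrm M)$ with $S$ covered by $\mathrm M$. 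For a fixed $S$ of size $2k$, the probability that it is a union of $k$ edges of a random $1/4$-matching is roughly $(k/n)^k$.

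Matrix hypercontractivity (Ben-Aroya--Regev--de Wolf) gives
\begin{equation*}
\sum_{|S|=\ell}\|\hat\rho(S)\|_1^2\le 2^{2q}\,(c\,q/\ell)^{\ell},
\end{equation*}
after optimising the noise parameter, and the same bound holds for $\sigma$. Combining the level-$2k$ weight bounds with the matching probability $(k/n)^k$ and the number of sets, the $k$-th term should be at most $(C q^2/(kn))^{k}$. The sum over $k\ge1$ is then $o(1)$ when $q=o(\sqrt n)$. This gives $\max(q_1,q_2)=\Omega(\sqrt n)$, and hence the total bound as well.

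\textbf{Main obstacle.} The hardest step is making the hypercontractive estimate strong enough at large levels $k$, including near $k=n/4$, where $T$ is all of $\mathrm M$. That is exactly where a perfect matching would let the term $\chi_{[n]}({\bf x})\chi_{[n]}({\bf y})$ appear with probability~1 and kill the argument, which is why the problem uses $1/4$-matchings. For large $k$, I would first try the crude bound $\|\hat\rho(S)\|_1\le1$ together with the tiny probability that a random $1/4$-matching covers a fixed set of size $2k$, namely $\binom{n/2}{k}$-type decay. For small and moderate $k$, I would use the $\delta$-optimised hypercontractive bound.
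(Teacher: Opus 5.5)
Your overall route is the paper's: reduce to $m=2$, fix shared randomness against the hard distribution, and expand in matrix-valued Fourier coefficients so that only $\hat\rho(S)\otimes\hat\sigma(S)$ with $S$ a union of edges of $\mathrm M$ survive. You then weight by the covering probability $g(|S|)=\binom{n/4}{|S|/2}/\binom{n}{|S|}$ and apply Ben-Aroya--Regev--de Wolf with $\delta=\ell/q$. Bounding Bob's bias directly, rather than the trace distance of $\rho_{\mathrm M}$ from the ideal state, is an inessential difference.

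\textbf{The large-level regime.} There is a genuine gap exactly where you flag the main obstacle. The crude bound $\|\hat\rho(S)\|_{tr},\|\hat\sigma(S)\|_{tr}\le 1$ fails. With it, the level-$k$ contribution becomes $\sum_{|S|=2k}g(2k)=\binom{n}{2k}g(2k)=\binom{n/4}{k}\ge 1$, the expected number of $k$-subsets of $\mathrm M$. The small covering probability is exactly cancelled by the number of sets $S$, and even the top level $k=n/4$ alone contributes up to $1$. You also cannot push the hypercontractive estimate there: $\delta=\ell/q\le 1$ forces $\ell\le q$. For $\ell>q$ no bound of the form $(cq/\ell)^\ell$ holds, since a sender can put trace-norm-$1$ weight on a single set $S$ of size $n/2$ by encoding $\bigoplus_{i\in S}x_i$ in one qubit.

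The missing idea is to treat all levels $|S|>q$ at once using the $p=2$ case (Parseval):
$\sum_S\|\hat\rho(S)\|_{tr}^2\le 2^q\sum_S\|\hat\rho(S)\|_F^2=2^q\,\E_{\bf x}\Tr(\rho_{\bf x}^2)\le 2^q$, and likewise for $\sigma$. Combine this with the fact (from Gavinsky et al.) that $g$ is nonincreasing on $\{q,\ldots,n/2\}$. A single Cauchy--Schwarz over all such $S$ then bounds this part by $g(q)\,2^q\le(2eq/n)^{q/2}$, which is negligible. This is precisely the paper's large-$k$ step.

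\textbf{A normalization slip in the small-level regime.} Your estimate $\sum_{|S|=\ell}\|\hat\rho(S)\|_1^2\le 2^{2q}(cq/\ell)^\ell$ is combined with the unnormalized H\"older bound $|\Tr(\Pi(A\otimes B))|\le\|A\|_{tr}\|B\|_{tr}$. That would leave a factor $2^{2q}$ in front of $(Cq^2/(kn))^k$ and ruin the argument already at $k=1$.

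BRdW is stated for normalized Schatten norms, where a $q$-qubit density matrix has $\|\rho\|_p^p\le 2^{-q}$. Its correct unnormalized consequence is $\sum_{|S|=\ell}\|\hat\rho(S)\|_{tr}^2\le\delta^{-\ell}2^{2\delta q/(1+\delta)}\le(4q/\ell)^\ell$ at $\delta=\ell/q$, with no $2^{2q}$. With this bound and $g(2k)\le(ek/n)^k$, your per-level estimate $(4eq^2/(kn))^k$ holds for $2k\le q$ and sums to $O(q^2/n)$, matching the paper's small-$k$ case.
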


Due to space constraints, we defer the proof to Appendix~\ref{ApprndixA}. For comparison, the bipartite Hidden Matching problem admits a one-way quantum protocol requiring only $O(\log n)$ qubits of communication from Alice to Bob~\cite{BarYossef2004}. In contrast, Theorem~\ref{theo2} shows that if we have two or more unentangled Alices, then at least one of them needs to send $\Omega(\sqrt n)$ qubits. Combined with Theorem~\ref{theo1}, this establishes an exponential separation between entanglement-assisted classical communication and unassisted quantum communication in our multiparty setting. 

\emph{Shared entanglement (between sender and receiver).} Consider the communication model $Q^{\parallel,\{\mathrm{Ent}_{A_iB}\}_{i=1}^{m}}$, where each Alice shares $r$ EPR pairs with Bob. Let the shared state be $\Phi_{AB}^{\otimes r}$. Upon receiving input ${\bf x}$, Alice can apply a CPTP map $\mathcal{G}_{\bf x} : A \rightarrow \mathcal{R} \otimes A'$ where $\mathcal{R}$ is a $q$-qubit message register and $A'$ is an auxiliary system retained by Alice. The global state becomes  $\rho^{({\bf x})}_{\mathcal{R}A'B}=(\mathcal{G}_{\bf x} \otimes I_{B})(\Phi^{\otimes r}_{AB})$. Alice then sends the register $\mathcal{R}$ to Bob. Consequently, Bob receives the system $\mathcal{R}B$ in state  $\rho^{({\bf x})}_{\mathcal{R}B}=\operatorname{Tr}_{A'}\!\left[\rho^{(x)}_{\mathcal{R}A'B}\right]$, which acts on a $(q+r)$-qubit Hilbert space. Notably this protocol can be simulated without any prior entanglement. Indeed, Alice can locally prepare the mixed state $\rho^{({\bf x})}_{\mathcal{R}B}$ and transmit the entire $(q+r)$-qubit system to Bob, who thereby receives exactly the same state as in the original protocol. Hence any protocol using $q$ qubits of communication and $r$ shared EPR pairs induces a protocol without preshared entanglement using $q+r$ qubits of communication. Applying Theorem~\ref{theo2}, we obtain $q+r = \Omega(\sqrt{n})$. If the amount of prior entanglement is limited, i.e., $r \leq c\sqrt{n}$ for sufficiently small constant $c$, then the bound implies $q = \Omega(\sqrt{n})$. However, if the shared entanglement is large ($r \gg \sqrt{n}$), then the condition $q + r \ge \Omega(\sqrt{n})$ does not impose a meaningful constraint on $q$, since the bound may already be satisfied by $r$ alone.

\emph{Shared entanglement (among senders).} The above lower-bound argument does not extend to the communication model $Q^{\parallel,\mathrm{Ent}_{A_1\ldots A_m}}$, in which the senders share preshared entanglement. In this setting, the quantum messages received by Bob are no longer tensor-product states, and consequently the simulation argument used above no longer applies. Indeed, suppose the senders $\{\mathrm{Alice}_r\}_{r=1}^{m}$ initially share the $m$-partite GHZ state $\ket{\mathcal G}_{m}=\tfrac{1}{\sqrt n}\sum_{k=1}^{n}\bigotimes_{r=1}^{m}\ket{k}_{A_r}$, where each party holds $\log_2 n$ qubits (in contrast to the previous, Bob does not share part of this GHZ state). Upon receiving their respective inputs, Alice$_r$ applies the phase encoding $U_{{\bf x}^r}$ to her subsystem and sends it to Bob using only $O(\log n)$ qubits of communication. Bob then holds the state
\begin{align}
\ket{\psi}_m=\frac{1}{\sqrt n}\sum_{k=1}^{n}(-1)^{p_k}\ket{k}_A,\quad \ket{k}_A:=\bigotimes_{r=1}^{m}\ket{k}_{A_r},   
\end{align}
which takes a form similar to the state in Eq.~\eqref{hm-Bar}. Therefore, Bob can do measurement based on his input $\mathrm{M}$ and solve the relational Hidden Matching problem.

\section{Application to Bounded-Storage Cryptography} 

The Hidden Matching problem induces a strong seeded randomness extractor, as observed in~\cite{Gavinsky2007}. Randomness extractors are important tools in theoretical computer science~\cite{Shaltiel2004} and cryptography~\cite{Bennett1988,Impagliazzo1989,Konig2005}. Before discussing the explicit construction, we recall two definitions.

\begin{definition}[Weak random source]
A random variable ${\bf x}$ over the set $\{0,1\}^n$ is called a \emph{$k$-min-entropy source}, or simply a \emph{weak random source}, if its min-entropy satisfies $H_{\infty}({\bf x})\ge k$.
Here the \emph{min-entropy} of ${\bf x }$ is defined as $H_{\infty}({\bf x})=-\log_2\left(\max_{{\bf x}\in\{0,1\}^n}\Pr[{\bf x}=\mathtt{x}]\right)$.
Thus, if $H_{\infty}({\bf x})=k$, then no individual outcome of the random variable occurs with probability greater than $2^{-k}$.
\end{definition}

\begin{definition}[Strong Seeded Randomness Extractor]
Let $n,\ell,d,k\in\mathbb{N}$ and let $\varepsilon>0$. A function
$\mathrm{Ext}:\{0,1\}^{n}\times\{0,1\}^{d}\rightarrow\{0,1\}^{\ell}$
is called a strong \emph{$(k,\varepsilon)$-seeded randomness extractor} if, for every random variable ${\bf x}$ over $\{0,1\}^{n}$ satisfying $
H_{\infty}({\bf x})\ge k$, and an independent random variable
${\bf y}$ that is uniform over a finite set $S$ (of possible seeds),
the distribution of $({\bf y},
\mathrm{Ext}({\bf x},{\bf y}))$
is $\varepsilon$-close to the uniform distribution on $S\times \{0,1\}^{\ell}$, equivalently
\[
\E_{{\bf y}}
\norm{
\E_{{\bf x}}\left[ \mathrm{Ext}({\bf x},{\bf y})- U_\ell
\right]
}_{tvd}\leq\varepsilon.
\]
Here total variation distance (tvd) between probability distributions $P$ and~$Q$ is defined as $\norm{P-Q}_{\mathrm{tvd}}=\frac12\sum_{r\in\{0,1\}^{\ell}}|P(r)-Q(r)|$.
\end{definition} 

We first explain the result of~\cite{Gavinsky2007}.
Let ${\bf x}\in\{0,1\}^{n}$ be a weak random source and let $\mathrm{M}\in\mathscr{M}_{n/4}$ denote a uniformly random $1/4$-matching. The Hidden Matching extractor $\mathrm{Ext}:\{0,1\}^{n}\times\mathscr{M}_{n/4}\rightarrow\{0,1\}^{n/4}$ 
defined as $\mathbb{M}({\bf x})$, where $\mathbb{M}$ is an $n/4\times n$ matrix over $\mathbb{F}_2$, with one weight-2 row for each edge of the $1/4$-matching $\mathrm{M}$.
It produces an $n/4$-bit output ${\bf z}=\mathrm{Ext}({\bf x},\mathrm{M})$ that remains statistically close to uniform even when the seed $\mathrm{M}$ is public. The security of the resulting extractor against classical memory-bounded adversaries is very similar to the communication lower bound of~\cite{Gavinsky2007} for Boolean Hidden Matching, relying on the Fourier coefficients inequality of Kahn, Kalai, and Linial (KKL)~\cite{Kahn1988}. If the source has min-entropy $H_{\infty}({\bf x})\ge n-\gamma\varepsilon\sqrt{n}$, then ${\bf z}$ is $\varepsilon$-close to uniform, implying security against classical bounded-storage adversaries with memory $s=\gamma\varepsilon\sqrt{n}$~\cite{Gavinsky2007} (note that we fixed the $\alpha$ of~\cite{Gavinsky2007} to be 1/4 in this paper). In contrast, a \emph{quantum} adversary storing the $\log(n)$-qubit state 
$\ket{\psi_{\bf x}}=\frac{1}{\sqrt n}\sum_{i=1}^{n}(-1)^{x_i}\ket{i}$
can recover one of the Hidden Matching parities (which is one of the bits of ${\bf z}$) once the seed~$\mathrm{M}$ is revealed, as explained after Eq.~\eqref{hm-Bar}. It should be noted that this extractor $\mathrm{Ext}$ is not intended to optimize extraction parameters, as it incurs a significant loss of source min-entropy and requires a long random seed~$\mathrm{M}$ (much more efficient quantum-proof extractors are known, e.g.~\cite{De2012Trevisan,BenAroyaTaShma2012}). Rather, its purpose is to establish a fundamental separation: a function that is a strong extractor against classical storage can completely fail in the presence of small quantum storage. 

\begin{figure}[t!]
\centering
\includegraphics[width=1.0\linewidth]{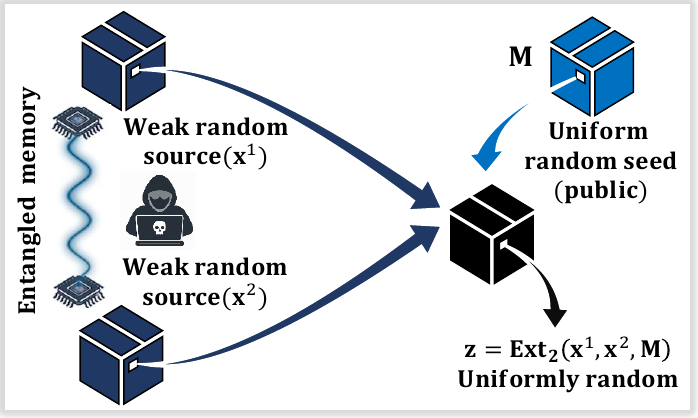}
\caption{\textbf{Multipartite Hidden Matching extractor.} Two independent weak random sources ${\bf x}^{1}$ and ${\bf x}^{2}$, together with a uniformly random public seed $\mathrm{M}$, are processed by the multipartite Hidden Matching extractor to produce an extracted key $ {\bf z}=\mathrm{Ext}_2({\bf x}^{1},{\bf x}^{2},\mathrm{M})$, that is statistically close to uniform. Adversaries with unentangled memory require polynomial-size quantum side-information to attack the protocol. In contrast, pre-shared entanglement enables an adversary to compromise the extractor using only $O(\log n)$ qubits of storage per source, establishing an exponential separation between entangled and unentangled quantum memory.}
\label{fig2}
\end{figure}

We generalize this one-source construction to multiple independent weak sources by introducing a two-source Hidden Matching extractor: $\mathrm{Ext}_2:\{0,1\}^{n}\times\{0,1\}^{n}\times\mathscr{M}_{n/4}\rightarrow\{0,1\}^{n/4}$ that combines the Hidden Matching parities of ${\bf x}^1$ and ${\bf x}^2$ as $\mathrm{Ext}_2({\bf x}^1,{\bf x}^2,\mathrm{M})=\mathbb{M}({\bf x}^1\oplus {\bf x}^2)$. This maps two independent $n$-bit sources (${\bf x}^1$ and ${\bf x}^2$, each with a lower bound on their min-entropy) and a uniformly random seed (the $1/4$-matching~$\mathrm{M}$) to an $n/4$-bit string that is intended to be close to uniform, see Fig.~\ref{fig2}. Although multi-source randomness extractors have been studied extensively for decades \cite{Barak2006,Barak2012,Chattopadhyay2019}, comparatively little is known about their security in the presence of entangled quantum side information \cite{Kasher2012}. 

In the case of extractors with classical side-information about the source(s), that side-information can be folded into the min-entropy condition. However, to deal with quantum side-information about the source(s) we need the following definition:

\begin{definition}[Strong Seeded Two-Source Randomness Extractor with Quantum side-information]
Let $n,\ell,d,k\in\mathbb{N}$ and let $\varepsilon>0$. A function
$\mathrm{Ext}:\{0,1\}^{n}\times\{0,1\}^{n}\times\{0,1\}^{d}\rightarrow\{0,1\}^{\ell}$
is called a strong \emph{$(k,\varepsilon)$-seeded two-source randomness extractor} against quantum side-information map $({\bf x}^1,{\bf x}^2)\mapsto\rho({\bf x}^1,{\bf x}^2)$ if, for every pair of independent random variables ${\bf x}^1,{\bf x}^2$ over $\{0,1\}^{n}$ satisfying $
H_{\infty}({\bf x}^1)\ge k,H_{\infty}({\bf x}^2)\ge k$, and an independent random variable
${\bf y}$ that is uniform over a finite set $S$ (of possible seeds),
the distribution of $
({\bf y},\mathrm{Ext}({\bf x}^1,{\bf x}^2,{\bf y}))$
is $\varepsilon$-close to the uniform distribution on $S\times \{0,1\}^{\ell}$ even given $\rho{(\bf x}^1,{\bf x}^2)$, equivalently
\[
\scriptstyle
\E_{{\bf y}}\norm{
\E_{{\bf x}^1,{\bf x}^2}\left[
\rho({\bf x}^1,{\bf x}^2)\otimes \mathrm{Ext}({\bf x}^1,{\bf x}^2,{\bf y})-\rho({\bf x}^1,{\bf x}^2)\otimes U_\ell
\right]
}_{tvd}\leq\varepsilon.
\]
Here $\norm{\cdot}_{tvd}$ is the quantum generalization of total variation distance (=sum of singular values divided by 2).
\end{definition} 

We show that this multipartite construction reveals a qualitatively new distinction between two unentangled and entangled \emph{quantum} side-information:

\begin{theorem}\label{theo3}
{\rm [Entangled vs Unentangled side-information]} 
Fix $\varepsilon\in(0,1/4)$.
The two-source Hidden Matching extractor~$\mathrm{Ext}_2$ exhibits an exponential separation between unentangled and entangled side-information:
\begin{itemize}[itemsep=-.05cm, topsep=2pt, leftmargin=.6cm]
\item[(1)] there exists a positive constant $K$ such that if the side-information consists of two unentangled  states of $q=K\sqrt{\varepsilon n}$-qubits each, one depending on ${\bf x}^1$ and one depending on ${\bf x}^2$, then $\mathrm{Ext}_2$ is a strong $(n-K\sqrt{\varepsilon n},\varepsilon)$-seeded two-source randomness extractor;
\item[(2)] if the side-information may consist of two entangled states of $q=\log n$ qubits, one depending on ${\bf x}^1$ and one depending on ${\bf x}^2$, then $\mathrm{Ext}_2$ is not a strong $(n,\varepsilon)$-seeded two-source randomness extractor.
\end{itemize}
\end{theorem}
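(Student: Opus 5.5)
Part (2) follows from the entangled protocol of Theorem~\ref{theo1}, and part (1) from the machinery behind Theorem~\ref{theo2}, recast as a statement about side-information. We treat (2) first because it is easy.

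\emph{Part (2).} Take ${\bf x}^1,{\bf x}^2$ uniform, so they have min-entropy $n$. The adversary starts from the tripartite state $\ket{\mathcal G}_3$ of Theorem~\ref{theo1}, with Bob's share playing the role of the extractor's output register. One option is to let the two memories consist of the sender shares after the phase encodings $U_{{\bf x}^1}$ and $U_{{\bf x}^2}$. Then the $\log n$-qubit registers holding the ${\bf x}^1$-dependent and ${\bf x}^2$-dependent parts, taken together, form exactly the state $\frac{1}{\sqrt n}\sum_k(-1)^{p_k}\ket{k}\ket{k}$ with $p_k=x^1_k\oplus x^2_k$. This is the sender-entanglement variant described after Theorem~\ref{theo2}.

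Given the public seed $\mathrm{M}$, the adversary completes $\mathrm{M}$ to a perfect matching and applies the Hidden Matching measurement. With probability $1/2$ it obtains an edge $\ell\in\mathrm{M}$ together with the bit $z_\ell=p_{i_\ell}\oplus p_{j_\ell}$. Define the distinguisher that accepts iff the learned $z_\ell$ agrees with the corresponding output bit. Under the real distribution it accepts with probability $1$ (when the edge lies in $\mathrm{M}$), and under $U_\ell$ with probability $1/2$. This gives a distinguishing advantage of at least $1/4>\varepsilon$, so the tvd exceeds $\varepsilon$ and $\mathrm{Ext}_2$ is not a strong $(n,\varepsilon)$-extractor.

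\emph{Part (1).} We argue by contradiction, converting a successful attack into a protocol for $2\mathtt{BHM}_n$.

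1. Suppose unentangled product side-information $\rho_1({\bf x}^1)\otimes\rho_2({\bf x}^2)$ on $q$ qubits each distinguishes $(\mathrm{M},\mathbb{M}({\bf x}^1\oplus{\bf x}^2))$ from $(\mathrm{M},U)$ with advantage exceeding $\varepsilon$.

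2. Such a product state is precisely what two unentangled Alices can send in the $Q^{\parallel}$ model. The distinguishing advantage on ${\bf w}$ versus uniform should therefore convert into a bias for determining $b$ in $2\mathtt{BHM}_n$: feed Bob ${\bf w}=\mathbb{M}({\bf x}^1\oplus{\bf x}^2)\oplus b^{n/4}$. The two cases are compared via a hybrid through the uniform distribution.

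3. The lower-bound proof of Theorem~\ref{theo2} (Appendix~\ref{ApprndixA}) is not a black box here. We expect it to be a Fourier/hypercontractive argument: a matrix-valued KKL inequality applied to $\E[\rho_1({\bf x}^1)(-1)^{s\cdot{\bf x}^1}]$, bounding the weight on low levels $|s|=2k$ by roughly $(O(q)/k)^{k}$. Combined with the probability that a random $1/4$-matching induces a given even set, this bounds the bias by about $\sum_k (O(q^2)/(kn))^{k/2}$-type terms.

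4. To handle min-entropy $n-K\sqrt{\varepsilon n}$ instead of uniform inputs, we fold the entropy deficiency $\delta=K\sqrt{\varepsilon n}$ into the hypercontractive bound, as in~\cite{Gavinsky2007}. This treats a source of deficiency $\delta$ like $\delta$ bits of classical memory, so the total effective memory is $q+\delta=O(\sqrt{\varepsilon n})$.

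5. Choosing $K$ small makes the bound on the tvd at most $\varepsilon$.

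The \textbf{main obstacle} is step 3 together with step 4. Since the side-information is a product of two states, the Fourier coefficient of the joint state at $s$ factorizes, and both factors must be controlled simultaneously. The relevant quantity is $\sum_s \Pr[s\in\mathrm{span}\,\mathbb{M}]\,\|\hat\rho_1(s)\|_1\|\hat\rho_2(s)\|_1$, which needs Cauchy–Schwarz followed by the matrix KKL inequality for each factor. Partial matchings are essential here, to keep the probability of covering large sets small. Getting the $\sqrt{\varepsilon}$ dependence requires tracking constants in the level-$k$ sum, mirroring the single-source analysis of~\cite{Gavinsky2007}.
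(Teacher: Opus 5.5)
Part (2) of your proposal is the paper's argument. The core of part (1) is also the paper's route (Theorem~\ref{th:mainlowerbound}):
\begin{itemize}
\item expand in matrix-valued Fourier coefficients;
\item use the product structure so that only $\widehat{f_1}(\mathbb{M}^T{\bf u})\otimes\widehat{f_2}(\mathbb{M}^T{\bf u})$ survives;
\item bound $\sum_{\bf v}\Pr_{\mathrm M}[{\bf v}\in\mathrm{Im}\,\mathbb{M}^T]\,\|\widehat{f_1}({\bf v})\|_1\|\widehat{f_2}({\bf v})\|_1$ by Cauchy--Schwarz plus the Ben-Aroya--Regev--de Wolf inequality for each factor, absorbing the deficiency $c\le q$ into the right-hand side.
\end{itemize}

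\textbf{Steps 1--2 fail.} They should be dropped, not repaired. Hardness of $2\mathtt{BHM}_n$ only says that $Z=\mathbb{M}({\bf x}^1\oplus{\bf x}^2)$ is hard to distinguish from its complement $\overline{Z}$. That is strictly weaker than $Z$ being close to uniform. The hybrid through $U$ breaks whenever a test accepts $Z$ and $\overline{Z}$ with the same probability. For example, store $x^r_1\oplus x^r_2\oplus x^r_3\oplus x^r_4$ in memory $r$, and when $(1,2),(3,4)\in\mathrm{M}$, check $z_\ell\oplus z_{\ell'}$ against the XOR of the two stored bits. This test separates $Z$ from $U$ but not $Z$ from $\overline{Z}$. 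Moreover, Theorem~\ref{theo2} is a constant-error statement for uniform inputs, so it could not yield an $\varepsilon$-advantage bound for min-entropy sources anyway. The paper goes in the valid direction: it bounds $\E_{\mathrm M}\|\Delta_{\mathrm M}\|_{tvd}\le O(q^2/n)$ directly and derives Theorem~\ref{theo2} as a corollary. Your step 3 already bounds this trace distance directly, so make that the argument.

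\textbf{Missing pieces in part (1).}
\begin{itemize}
\item The hypercontractive step with $p=1+k/q$ requires $k\le q$. For $k>q$ the paper uses $p=2$ (Parseval), which gives $\sum_{\bf v}\|\widehat{f_1}({\bf v})\|_1^2\le |A|/2^{n+q}$. It combines this with monotonicity of $g(k)=\binom{n/4}{k/2}/\binom{n}{k}$ on $\{q,\dots,n/2\}$, and uses $c\le q\le\sqrt n$ to make the tail $2^{c-q}(eq/2n)^{q/2}$ small.
\item The bound is proved for flat sources and extended to arbitrary min-entropy sources via Chor--Goldreich and convexity.
\end{itemize}

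\textbf{Part (2) wording.} Drop the reference to $\ket{\mathcal G}_3$. If the two memories are shares of a tripartite GHZ state whose third share is unavailable, their joint state is $\frac1n\sum_k\ketbra{kk}{kk}$, and the phases $(-1)^{p_k}$ are lost. The memories must share the bipartite state $\frac{1}{\sqrt n}\sum_k\ket{k}\ket{k}$, as in the paper and as your displayed state already implicitly assumes.
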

Part~(2) is easy: the entangled side-information can be 
\[
\left(U_{{\bf x}^1}\otimes
U_{{\bf x}^2}\right)\frac{1}{\sqrt{n}}\sum_{i=1}^n\ket{i}\ket{i}=\frac{1}{\sqrt{n}}\sum_{i=1}^n(-1)^{x^1_i\oplus x^2_i}\ket{i}\ket{i},
\]
and one bit of $\mathrm{Ext}_2({\bf x}^1,{\bf x}^2,\mathrm{M})$ can be learned (with probability 1/2, knowing when we succeed) by doing the usual
$\mathrm{M}$-dependent measurement.
This one bit gives
\[\scriptstyle
\E_{{\bf y}}\norm{
\E_{{\bf x}^1,{\bf x}^2}\left[
\rho({\bf x}^1,{\bf x}^2)\otimes \mathrm{Ext}({\bf x}^1,{\bf x}^2,{\bf y})-\rho({\bf x}^1,{\bf x}^2)\otimes U_\ell
\right]
}_{tvd}\geq 1/4.
\]
Our analysis for unentangled quantum adversaries to establish part~(1) is substantially more delicate than the security analysis against classical adversaries in~\cite{Gavinsky2007}, because the adversary now possesses quantum side-information about two independent weak sources rather than classical side-information. It relies on the hypercontractive inequality for matrix-valued functions developed by Ben-Aroya, Regev, and de Wolf~\cite{BenAroya2008}, which generalizes the KKL inequality to the quantum setting. A complete proof is provided in Appendix~\ref{ApprndixA}. The multipartite Hidden Matching extractor therefore establishes an exponential separation between entangled and unentangled quantum side-information, identifying  entanglement as an important resource in multi-source bounded-storage cryptography. As in the one-source Hidden Matching extractor of~\cite{Gavinsky2007}, our two-source construction is intended as a proof-of-principle rather than an optimized extractor. Its primary significance is to demonstrate that security against unentangled quantum side-information does not necessarily extend to adversaries possessing even a modest amount of preshared entanglement.

\section{Discussion} 
While the communication advantages of bipartite quantum resources have been extensively investigated, comparatively little has been known about the computational power of multipartite entanglement. We show that classical communication assisted multipartite entanglement can outperform unassisted quantum communication exponentially in a distributed computational task, establishing that preshared GHZ entanglement is not merely a substitute for quantum communication but can be a strictly stronger information-processing resource. 

Beyond communication complexity, our work reveals a novel connection between multipartite communication protocols and bounded-storage cryptography. By extending the Hidden Matching extractor to multiple independent weak random sources, we establish an exponential separation between entangled and unentangled quantum side-information. Specifically, compromising the extractor requires polynomial-size unentangled quantum memory, whereas preshared entanglement reduces the required quantum memory exponentially. 

More broadly, our work suggests that multipartite entanglement should be regarded as a computational resource whose power extends beyond the generation of nonlocal correlations. An important open question is to characterize the class of distributed computational tasks that admit exponential entanglement-assisted communication advantages and to determine which families of multipartite entangled states realize such advantages. Another natural direction is to investigate whether similar separations persist in interactive communication models, noisy communication settings, or networked quantum architectures. In particular, for $m\geq 3$, can we strengthen our polynomial lower bound on the required quantum communication when the Alices share pairwise entanglement, such as EPR pairs? (for $m=2$, pairwise entanglement between the two Alices is just a GHZ state, which allows an efficient protocol).

Our cryptographic results likewise open several directions for future research. The multipartite Hidden Matching extractor presented here is intended as a proof-of-principle construction rather than an optimized extractor, raising the question of whether similar entanglement-induced separations can be achieved with randomness extractors possessing near-optimal parameters. More generally, it will be interesting to investigate similar qualitative distinction between entangled and unentangled quantum side-information for other cryptographic primitives, including privacy amplification, secret-key agreement, and secure multi-party computation. We expect that these connections between multipartite entanglement, communication complexity, and quantum cryptography will stimulate further developments in distributed quantum information processing and secure quantum networks.

\medskip

\noindent{\bf Acknowledgement}: MB acknowledges the financial support through the National Quantum Mission (NQM) of the Department of Science and Technology, Government of India. RdW is partially supported by the Dutch Research Council (NWO) through Gravitation-grant Quantum Software Consortium, 024.003.037.

\smallskip

\noindent{\bf AI statement}: All the ideas and writing are human. ChatGPT 5.6 pro was used for a sanity-check on the near-final manuscript, revealing some small errors in the writing that we corrected.


%

\onecolumngrid

\appendix
\section{Proof of Theorem~\ref{theo2} \& Theorem~\ref{theo3}}\label{ApprndixA}

\noindent We begin by proving a more general theorem and then will argue that Theorems~\ref{theo2} and~\ref{theo3} follow as special cases. Consider the function $Z:\01^n\times\01^n\times\mathscr{M}_{n/4}\to\01^{n/4}$ defined by  $Z({\bf x}^1,{\bf x}^2,\mathrm{M})=\mathbb{M}({\bf x}^1\oplus {\bf x}^2)$, where $\mathbb{M}$ is an $n/4\times n$ matrix over $\mathbb{F}_2$, with one weight-2 row for each edge of the $1/4$-matching $\mathrm{M}$. This maps two $n$-bit sources (${\bf x}^1$ and ${\bf x}^2$) and a seed (the $1/4$-matching~$\mathrm{M}$) to an $n/4$-bit string that is intended to be close to uniformly random. We sometimes write $Z_\mathrm{M}({\bf x}^1,{\bf x^2)}$ if $\mathrm{M}$ is fixed while ${\bf x}^1,{\bf x}^2$ are random. The analysis below roughly follows the proof of \cite{Gavinsky2007}, with the hypercontractive inequality for matrix-valued functions of~\cite{BenAroya2008} replacing the standard hypercontractive inequality in order to deal with quantum storage instead of classical, but getting all the details to fit together is quite nontrivial.

\begin{theorem}\label{th:mainlowerbound}
Let $0\leq c\leq q\leq n$ be integers. Let $A,B\subseteq\01^n$ each have size $\geq 2^{n-c}$. Let $f_1({\bf x}^1)$ assign a $q$-qubit density matrix for each ${\bf x}^1\in A$, and similarly let $f_2({\bf x}^2)$ assign a $q$-qubit density matrix for each ${\bf x}^2\in B$.
Let ${\bf x}^1,{\bf x}^2$ be uniformly distributed over $A,B$ respectively, and $\mathrm{M}$ be uniform over the set of all $1/4$-matchings.
Let
\[
\rho_\mathrm{M} = \E_{{\bf x}^1\in A,{\bf x}^2\in B}\left[f_1({\bf x}^1)\otimes f_2({\bf x}^2)\otimes \ketbra{Z_\mathrm{M}({\bf x}^1,{\bf x}^2)}{Z_\mathrm{M}({\bf x}^1,{\bf x}^2)}\right]
\]
be the state of a party who has the quantum states $f_1({\bf x}^1),f_2({\bf x}^2)$ and the matching $\mathrm{M}$ (but not ${\bf x}^1,{\bf x}^2$), alongside the value of the function $Z_\mathrm{M}$ (unknown to that party).
Let 
$
\Delta_{\mathrm{M}}=\E_{{\bf x}^1\in A,{\bf x}^2\in B}\left[f_1({\bf x}^1)\otimes f_2({\bf x}^2)\otimes \left(\ketbra{Z_\mathrm{M}({\bf x}^1,{\bf x}^2)}{Z_\mathrm{M}({\bf x}^1,{\bf x}^2)}-\frac{I}{2^{n/4}}\right)\right]$
be the difference between $\rho_{\mathrm{M}}$, and the state where the last register is maximally mixed (i.e., uniform over $\01^{n/4}$). Then
\[
\E_{\mathrm{M}}[\norm{\Delta_{\mathrm{M}}}_{tvd}]\leq O(q^2/n).
\]
\end{theorem}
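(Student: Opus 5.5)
The plan is to follow the Fourier-analytic route of \cite{Gavinsky2007}, with the matrix-valued hypercontractive inequality of \cite{BenAroya2008} in place of KKL. First I will expand the last register in characters. For $s\in\01^{n/4}$ put $D_s=\sum_z(-1)^{s\cdot z}\ketbra{z}{z}$, so that $\ketbra{z}{z}-I/2^{n/4}=2^{-n/4}\sum_{s\neq 0}(-1)^{s\cdot z}D_s$. Since $s\cdot Z_\mathrm{M}({\bf x}^1,{\bf x}^2)=S\cdot{\bf x}^1\oplus S\cdot{\bf x}^2$ with $S:=\mathbb{M}^Ts\subseteq[n]$, independence of ${\bf x}^1,{\bf x}^2$ gives
\[
\Delta_\mathrm{M}=2^{-n/4}\sum_{s\neq 0}\widehat{F_1}(\mathbb{M}^Ts)\otimes\widehat{F_2}(\mathbb{M}^Ts)\otimes D_s .
\]
Here $\widehat{F_1}(S):=\E_{{\bf x}^1\in A}[f_1({\bf x}^1)(-1)^{S\cdot{\bf x}^1}]$, which is the Fourier coefficient of $F_1=\frac{2^n}{|A|}f_1\mathbbm{1}_A$, and similarly for $F_2$. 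Note that $S$ is the union of $|s|$ edges of $\mathrm{M}$, so $|S|=2|s|$.

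Next I apply the triangle inequality for the trace norm. Using $\norm{X\otimes Y\otimes D_s}_1=\norm{X}_1\norm{Y}_1\cdot 2^{n/4}$, this gives
\[
\norm{\Delta_\mathrm{M}}_{1}\le\sum_{s\neq0}\norm{\widehat{F_1}(\mathbb{M}^Ts)}_1\norm{\widehat{F_2}(\mathbb{M}^Ts)}_1 .
\]
This avoids any $2^q$ dimension loss. Averaging over $\mathrm{M}$, I group the sets $S$ by size $2k$. A fixed $S$ with $|S|=2k$ arises as $\mathbb{M}^Ts$ for a random $1/4$-matching with probability
\[
p_k\approx\binom{n/4}{k}\Big/\binom{n}{2k}\cdot(\text{pairing factor})\le (Ck/n)^k .
\]
Then Cauchy--Schwarz gives
\[
\E_\mathrm{M}\norm{\Delta_\mathrm{M}}_1\le\sum_{k\ge1}p_k\sqrt{W^1_{2k}W^2_{2k}},\qquad W^i_j:=\sum_{|S|=j}\norm{\widehat{F_i}(S)}_1^2 .
\]

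The key ingredient is a level-$j$ weight bound for $q$-qubit-valued functions supported on a set of size $\ge 2^{n-c}$. I will derive it from the inequality of \cite{BenAroya2008}, $\sum_S(p-1)^{|S|}\norm{\widehat F(S)}_p^2\le\norm{F}_p^2$ (normalized Schatten and $L_p$ norms), applied to $F_i$. I then convert $\norm{\cdot}_p$ to $\norm{\cdot}_1$, losing a factor $2^{2q(1-1/p)}$, and note $\norm{F_i}_p\le 2^{c(1-1/p)}$ since each $f_i({\bf x})$ has trace one. Choosing $p=1+\Theta(j/(c+q))$, exactly as in the proof of the random-access-code bound there, should give $W^i_j\le (O(q)/j)^j$ for $j\le O(q)$, using $c\le q$. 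For $j\gg q$, I instead fix $p$ at a constant near $1$, getting $W^i_j\le (p-1)^{-j}2^{O(q)}$.

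Plugging in, the small-$k$ terms are at most $(Ck/n)^k(C'q/k)^{2k}=(C''q^2/(kn))^k$. This is a geometric series dominated by $k=1$, giving $O(q^2/n)$ when $q^2\le\gamma n$. If $q^2>\gamma n$ the claim is trivial, since the trace distance is at most $1$. For large $k$, the terms are $(Ck/n)^k\delta^{-2k}2^{O(q)}$. Since $k\le n/4$, I need the decay of $p_k$ to beat $2^{O(q)}$. I expect this tail regime, and the careful choice of $p$ and the constants in the $W_j$ bound so that the two regimes fit together, to be the main obstacle. The partial ($1/4$-)matching is essential here: a perfect matching would make $S=[n]$, with $p_{n/2}=1$, uncontrollable. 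My first attempt would use $p_k\le(4k/n)^k$, valid for $k\le n/4$, together with $\delta$ a small absolute constant. This makes $(4k/(\delta^2 n))^k2^{O(q)}$ tiny for $q\le k\le \delta^2 n/8$, and then I would check the remaining top range separately using $p_k\le 2^{-\Omega(n)}$ there.
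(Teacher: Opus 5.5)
Your set-up is the paper's own. This covers the character expansion of the last register (your $D_s$ is $2^{n/4}V(s)$) and the reduction to $\E_{\mathrm M}\norm{\Delta_{\mathrm M}}_{\mathrm{tr}}\le\sum_k p_k\sqrt{W^1_{2k}W^2_{2k}}$ with $p_k=\binom{n/4}{k}/\binom{n}{2k}\le(ek/n)^k$. That formula is exact by symmetry; there is no extra pairing factor. It also covers the low-level bound $W^i_j\le(16q/j)^j$ from the Ben-Aroya--Regev--de Wolf inequality with $p-1=j/q$ and $c\le q$, which gives the $O(q^2/n)$ geometric series.

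The gap is the high-level regime, which you yourself flag. The plan you sketch there does not close. With $p-1=\delta$ a small constant, your bound $\sqrt{W^1_{2k}W^2_{2k}}\le\delta^{-2k}2^{O(q)}$ grows like $2^{\Theta(n\log(1/\delta))}$ once $k=\Theta(n)$. Meanwhile $p_k$ only decays like $2^{-\Theta(n)}$ with a fixed constant. Concretely, at $k=n/4$ you have $p_{n/4}=1/\binom{n}{n/2}\ge 2^{-n}$, but $\delta^{-n/2}>2^n$ as soon as $\delta<1/4$, so your bound on that term exceeds $1$. Hence ``$p_k\le 2^{-\Omega(n)}$ in the top range'' cannot help unless it is paired with a different weight bound there.

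The missing idea is to stop bounding high levels one at a time. The paper does this for every level above $q$, which also makes your intermediate regime unnecessary.

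\emph{Total weight.} Take $p=2$, i.e.\ matrix Parseval, to control the total weight:
$\sum_S\norm{\widehat{F_i}(S)}_{\mathrm{tr}}^2\le 2^q\,\E_{\bf x}\norm{F_i({\bf x})}_{S_2}^2\le 2^{q+c}\le 2^{2q}$.

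\emph{Cauchy--Schwarz over all levels.} Applying it across all levels simultaneously gives
$\sum_{k>q/2}p_k\sqrt{W^1_{2k}W^2_{2k}}\le 2^{2q}\max_{q/2<k\le n/4}p_k$.

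\emph{Bounding the maximum.} The paper bounds it by $p_{q/2}$ using the monotonicity of $g$ from~\cite{Gavinsky2007}. Alternatively, $p_k\le(ek/n)^k$ holds for all $k\le n/4$; with $e$ in place of your $4$ it already gives $(e/4)^{n/4}$ at the top. Since $k\ln(n/(ek))$ is concave, the maximum is at most $\max\{(eq/2n)^{q/2},(e/4)^{n/4}\}$.

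Either way, the tail contributes at most $(8eq/n)^{q/2}+2^{2q}(e/4)^{n/4}=O(q/n)$ for $2\le q\le\sqrt n$ and $n$ large.
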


Note that this theorem is false if $\mathrm{M}$ is a \emph{perfect} matching even if $c=0$ and $q=1$: $f_1({\bf x}^1)$ and $f_2({\bf x}^2)$ could just be the parity of their respective inputs. The parity of those two bits equals the parity of the $n/2$ bits of $Z_\mathrm{M}({\bf x}^1,{\bf x}^2)$ in the case where $\mathrm{M}$ is a perfect matching, so you would already break the uniformity property. 

\begin{proof}
We may assume without loss of generality that $q$ is positive and even, and at most $\sqrt{n}$ (the conclusion is trivially true if $q>\sqrt{n}$).
Define the functions $f_1,f_2$ to be 0 outside of $A,B$ respectively.
Fix a $1/4$-matching $\mathrm{M}$. For a function $f:\01^n\to{\cal M}_d$,  where ${\cal M}_d$ is the set of $d$-dimensional complex matrices, define its Fourier transform $\widehat{f}:\01^n\to{\cal M}_d$ as 
\[
\widehat{f}({\bf s})=\frac{1}{2^n}\sum_{{\bf x}\in\01^n}f({\bf x})(-1)^{{\bf x}\cdot s}. 
\]
Note that the Fourier coefficient $\widehat{f}({\bf s})$ is a $d$-dimensional matrix here, not a number as is more common in Fourier analysis. We have the Fourier decomposition $f({\bf x})=\sum_{\bf s} \widehat{f}({\bf s})(-1)^{{\bf x}\cdot {\bf s}}$ for all ${\bf x}\in\01^n$.

We will also use a slightly different decomposition for diagonal matrices to analyze the output of the function~$Z_\mathrm {M}$. Define $V:\01^{n/4}\to{\cal M}_{2^{n/4}}$ as 
\[
V({\bf u})=\frac{1}{2^{n/4}}\sum_{{\bf w}\in\01^{n/4}}(-1)^{{\bf u}\cdot {\bf w}}\ketbra{{\bf w}}{\bf {w}}.
\]
Note that $V(0^{n/4})$ is the maximally mixed $n/4$-qubit state $I/2^{n/4}$, and for every ${\bf z}\in\01^{n/4}$ we have the decomposition 
\[
\ketbra{{\bf z}}{{\bf z}}=\sum_{\bf u} V({\bf u})(-1)^{{\bf u}\cdot {\bf z}}.
\]
We now use the decompositions of the functions $f_1$, $f_2$, and $\ketbra{Z_{\mathrm{M}}({\bf x}^1,{\bf x}^2)}{Z_{\mathrm{M}}({\bf x}^1,{\bf x}^2)}$ to analyze $\rho_\mathrm{M}$:
\begin{align*}
\rho_\mathrm{M} & = \E_{{\bf x}^1\in A,{\bf x}^2\in B}\left[f_1({\bf x}^1)\otimes f_2({\bf x}^2)\otimes \ketbra{Z_\mathrm{M}({\bf x}^1,{\bf x}^2)}{Z_\mathrm{M}({\bf x}^1,{\bf x}^2)}\right]\\
& = \frac{2^{2n}}{|A|\cdot|B|}\E_{{\bf x}^1\in\01^n ,{\bf x}^2\in \01^n}\left[f_1({\bf x}^1)\otimes f_2({\bf x}^2)\otimes\ketbra{Z_\mathrm{M}({\bf x}^1,{\bf x}^2)}{Z_\mathrm{M}({\bf x}^1,{\bf x}^2)}\right]\\
& = \frac{2^{2n}}{|A|\cdot|B|}\E_{{\bf x}^1\in\01^n ,{\bf x}^2\in \01^n}\left[\sum_{{\bf s}\in\01^n}\widehat{f_1}({\bf s})(-1)^{{\bf s}\cdot{\bf x}^1}\otimes \sum_{{\bf t}\in\01^n}\widehat{f_2}({\bf t})(-1)^{{\bf t}\cdot {\bf x}^2}
\otimes\sum_{{\bf u}\in\01^{n/4}}V({\bf u})(-1)^{u\cdot \mathbb{M}({\bf x}^1\oplus {\bf x}^2)}\right]\\
& = \frac{2^{2n}}{|A|\cdot|B|}\sum_{{\bf s,t,u}}\widehat{f_1}({\bf s})\otimes \widehat{f_2}({\bf t})\otimes V(\bf{u})
\cdot \E_{{\bf x}^1\in\01^n}\left[(-1)^{{\bf x}^1\cdot (\mathbb{M}^T {\bf u}\oplus {\bf s})}\right]
\cdot \E_{x^2\in \01^n}\left[(-1)^{x^2\cdot (\mathbb{M}^T u\oplus t)}\right]\\
& = \frac{2^{2n}}{|A|\cdot|B|}\sum_{{\bf u}\in\01^{n/4}}\widehat{f_1}(\mathbb{M}^T {\bf u})\otimes \widehat{f_2}(\mathbb{M}^T {\bf u})\otimes V({\bf u}),
\end{align*}
where the last equality is because the $\E_{{\bf x}^1}$ acts like a delta-function on $\mathbb{M}^T{\bf u}$ and ${\bf s}$, and  the $\E_{{\bf x}^2}$ acts like a delta-function on $\mathbb{M}^T{\bf u}$ and ${\bf t}$.

The difference $\Delta_\mathrm{M}$ between the above~$\rho_\mathrm{M}$ and the state where the last register is the maximally mixed state is
\begin{align*}
\Delta_\mathrm{M} = \frac{2^{2n}}{|A|\cdot|B|}\sum_{{\bf u}\in\01^{n/4}\backslash\{0^{n/4}\}}\widehat{f_1}(\mathbb{M}^T {\bf u})\otimes \widehat{f_2}(\mathbb{M}^T {\bf u})\otimes V({\bf u}),
\end{align*}
because $V(0^{n/4})$ equals the maximally mixed state $I/2^{n/4}$ so ${\bf u}=0^{n/4}$ drops out of the sum.

For $p\geq 1$ define the normalized Schatten $p$-norm of a $d$-dimensional matrix $W$ as $\norm{W}_p=(\frac{1}{d}\Tr(|W|^p))^{1/p}$.
One can show that these norms are monotone nondecreasing in~$p$. Note that if $W$ is an $n$-qubit density matrix, then $\norm{W}_p^p\leq 2^{-n}$ for every $p$.

Using triangle inequality and the fact that $\norm{V({\bf u})}_1=2^{-n/4}$ for all ${\bf u}\in\01^{n/4}$,
we bound
\begin{align*}
\norm{\sum_{{\bf u}\in\01^{n/4}\backslash\{0^{n/4}\}}\widehat{f_1}(\mathbb{M}^T {\bf u})\otimes \widehat{f_2}(\mathbb{M}^T {\bf u})\otimes V({\bf u})}_1
 \leq 2^{-n/4}\sum_{k=1}^{n} \sum_{{\bf u}\in\01^{n/4}:|\mathbb{M}^T{\bf u}|=k}\norm{\widehat{f_1}(\mathbb{M}^T u)}_1\cdot\norm{\widehat{f_2}(\mathbb{M}^T u)}_1.
\end{align*}
This analysis of $\Delta_\mathrm{M}$ so far is for a fixed $1/4$-matching $\mathrm{M}$. We now want to show that its norm is small when averaged uniformly over all $1/4$-matchings~$\mathrm{M}$.
We have the following probabilistic lemma  from~\cite[Lemma~3.3]{Gavinsky2007} for all $\bf v$ of even Hamming weight:
\[
\Pr_\mathrm{M}\left[\exists {\bf u}\in\01^{n/4} \mbox{ s.t.\ }\mathbb{M}^T{\bf u}={\bf v}\right]
=\frac{\binom{n/4}{k/2}}{\binom{n}{k}}\leq (e k/2n)^{k/2},
\]
where $\mathrm{M}$ is uniform over the set of all $1/4$-matchings, and the last inequality uses the well-known bounds $\binom{n/4}{k/2}\leq (e(n/4)/(k/2))^{k/2}=(en/2k)^{k/2}$ and $\binom{n}{k}\geq (n/k)^k$. There is at most one ${\bf u}$ such that $\mathbb{M}^T {\bf u}={\bf v}$.
For ${\bf v}$ of odd Hamming weight, the probability is 0 because $\mathbb{M}^T{\bf u}$ always has twice the Hamming weight of~${\bf u}$, so its weight cannot be odd. This allows us to bound the expectation over uniform~$\mathrm{M}$:
\begin{equation}
\E_\mathrm{M}[\norm{\Delta_\mathrm{M}}_1] \leq \frac{2^{2n}}{|A|\cdot|B|}2^{-n/4}
\sum_{k=1}^{2n/4}
\sum_{{\bf v}\in\01^{n}:|{\bf v}|=k}\Pr_\mathrm{M}\left[\exists {\bf u}\in\01^{n/4} \mbox{ s.t. }\mathbb{M}^T{\bf u}={\bf v}\right]\cdot \norm{\widehat{f_1}({\bf v})}_1\cdot\norm{\widehat{f_2}({\bf v})}_1\label{eq:beforehypercon}
\end{equation}
We have the following ``hypercontractive'' inequality from~\cite[Theorem~1]{BenAroya2008} to upper bound the Fourier coefficients of a function $f:\01^n\to {\cal M}_d$: for every $p\in[1,2]$,
\begin{equation}
\sum_{{\bf v}\in\01^n}
(p-1)^{|{\bf v}|}
\norm{\widehat{f}({\bf v})}_p^2
\leq
\left(\frac{1}{2^{n}}\sum_{{\bf x}\in\01^n}\norm{f({\bf x})}_p^p\right)^{2/p}\label{eq:matrixhypercon}
\end{equation}
We will split the sum-over-$k$ of Eq.~\eqref{eq:beforehypercon} into the small and large $k$, upper bounding both parts separately. 

{\bf The case of small $k$: $k\leq q$.}
We want to use Eq.~\eqref{eq:matrixhypercon} to bound the norms of the degree-$k$ Fourier coefficients of the function $f_1$. Let $p=1+\delta$ for $\delta\in[0,1]$ to be chosen later. Lower bound the left-hand side of \eqref{eq:matrixhypercon} by restricting the sum to only the $v$'s of weight~$k$, and use $\norm{\widehat{f_1}({\bf v})}_1\leq \norm{\widehat{f_1}({\bf v})}_p$.
Upper bound the right-hand side using that $\norm{f_1({\bf x})}_p^p\leq 2^{-q}$ for all ${\bf x}\in A$ because $f_1({\bf x})$ is a $q$-qubit density matrix, and $f_1({\bf x})=0$ for ${\bf x}\not\in A$.
This gives the following bound on the degree-$k$ coefficients of~$f_1$:
\[
\sum_{{\bf v}\in\01^n:|{\bf v}|=k}\norm{\widehat{f_1}({\bf v})}_1^2\leq 
\delta^{-k}\left(\frac{|A|}{2^{n+q}}\right)^{2/(1+\delta)}
\]
Choosing $\delta=k/q$ (which is $\leq 1$ because we assumed $k\leq q$ for this case), and using $2^n/|A|\leq 2^c$ and $c\leq q$ gives
\begin{align*}
\frac{2^{2n}}{|A|^2}\sum_{{\bf v}\in\01^n:|{\bf v}|=k}\norm{\widehat{f_1}({\bf v})}_1^2
& \leq 
\delta^{-k}2^{-2q/(1+\delta)}\left(\frac{2^n}{|A|}\right)^{2\delta/(1+\delta)}\\
& \leq 2^{-2q/(1+\delta)}(4q/k)^k\\
& \leq 2^{-2q}2^{2\delta q}(4q/k)^k\\
& = 2^{-2q}(16q/k)^k.
\end{align*}
We have the analogous bound for the degree-$k$ Fourier coefficients of~$f_2$, with $B$ instead of $A$.

Using Cauchy-Schwarz with these Fourier bounds for $f_1$ and $f_2$, we bound the part of the sum of \eqref{eq:beforehypercon} where $k\leq q$ by
\begin{align*} 
&\frac{2^{2n}}{|A|\cdot|B|}2^{-n/4}
\sum_{k=1}^{q}
\frac{\binom{n/4}{k/2}}{\binom{n}{k}}
\sqrt{\sum_{{\bf v}\in\01^{n}:|{\bf v}|=k}\norm{\widehat{f_1}({\bf v})}_1^2}\,\sqrt{\sum_{{\bf v}\in\01^{n}:|{\bf v}|=k}\norm{\widehat{f_2}({\bf v})}_1^2}\\
&=2^{-n/4}
\sum_{k=1}^{q}
\frac{\binom{n/4}{k/2}}{\binom{n}{k}}
\sqrt{\frac{2^{2n}}{|A|^2}\sum_{{\bf v}\in\01^{n}:|{\bf v}|=k}\norm{\widehat{f_1}({\bf v})}_1^2}\,\sqrt{\frac{2^{2n}}{|B|^2}\sum_{{\bf v}\in\01^{n}:|{\bf v}|=k}\norm{\widehat{f_2}({\bf v})}_1^2}\\
&\leq 2^{-2q-n/4}
\sum_{{\rm even\ }k=2}^{q} 
\frac{\binom{n/4}{k/2}}{\binom{n}{k}}(16q/k)^k\\
&\leq 2^{-2q-n/4}
\sum_{{\rm even\ }k=2}^{q} (ek/2n)^{k/2}(16q/k)^k\\
&\leq 2^{-2q-n/4}
\sum_{{\rm even\ }k=2}^{q}O(q/\sqrt{kn})^k\\
&\leq 2^{-2q-n/4}\,O(q^2/n),
\end{align*}
where the last inequality is because the sum in the penultimate line is upper bounded by a geometric sum (by dropping the $\sqrt{k}$ from the denominator and assuming $n$ is sufficiently large), which is dominated by the term with $k=2$.

{\bf The case of large $k$: $k>q$.}
Norm-monotonicity and Eq.~\eqref{eq:matrixhypercon} with $p=2$ implies 
\[
\sum_{v\in\01^n}\norm{\widehat{f_1}({\bf v})}_1^2\leq \sum_{{\bf v}\in\01^n}\norm{\widehat{f_1}({\bf v})}_2^2\leq \frac{1}{2^{n}}\sum_{{\bf x}\in\01^n}\norm{f_1(x)}_2^2\leq \frac{|A|}{2^{q+n}},
\]
where the last inequality again uses that $\norm{f_1({\bf x})}_p^p\leq 2^{-q}$ for all ${\bf x}\in A$ because $f_1({\bf x})$ is a $q$-qubit density matrix, and $f_1({\bf x})=0$ for ${\bf x}\not\in A$.
We have a similar inequality for $f_2$.

Define $g(k)=\binom{n/4}{k/2}/\binom{n}{k}$. 
\cite{Gavinsky2007} show that $g$ is monotone nonincreasing for $k\in\{q,\ldots,n/2\}$.
Together with Cauchy-Schwarz, this allows us to upper bound the part of the sum of \eqref{eq:beforehypercon} with $k>q$ by 
\begin{align*}
\\
&\leq \frac{2^{2n}}{|A|\cdot|B|}2^{-n/4}
g(q)
\sqrt{\sum_{{\bf v}\in\01^{n}}\norm{\widehat{f_1}({\bf v})}_1^2}\sqrt{\sum_{{\bf v}\in\01^{n}}\norm{\widehat{f_2}({\bf v})}_1^2}\\
&\leq \frac{2^{n}}{\sqrt{|A|\cdot|B|}}2^{-q-n/4}g(q)\\
&\leq \frac{2^{n}}{\sqrt{|A|\cdot|B|}}2^{-q-n/4}(eq/2n)^{q/2}\\
&\leq 2^{c-q-n/4}(eq/2n)^{q/2}\\
& \leq 2^{-2q-n/4}\cdot q^2/n,
\end{align*}
where the last inequality uses $c\leq q$ and $eq/2n \leq 2^{-1/4}$ for sufficiently large $n$ (recall that we assumed $q\leq \sqrt{n}$). 
This concludes the upper bound for the large-$k$ case.

The trace-norm (=sum of singular values, which is twice the total-variation distance) of an $m$-qubit matrix is $2^{m}$ times its Schatten 1-norm, hence combining the upper bounds for the small and the large~$k$, we have
\[
\E_\mathrm{M}[\norm{\Delta_{\mathrm{M}}}_{tr}] =2^{2q+n/4}\E_\mathrm{M}[\norm{\Delta_\mathrm{M}}_1]\leq O(q^2/n). 
\]
This completes the proof. 
\end{proof}

The two main lower bound claims in our paper now follow as immediate corollaries:

\medskip

\noindent {\bf Proof of Theorem~\ref{theo2}:} 
It suffices to prove the theorem for $m=2$, since for $m\geq 3$ we can fix the inputs for $m-1$ senders to the all-0 string $0^n$ to reduce to the $m=2$ case.
Assume without loss of generality that the message lengths from the two Alices are equal to the same~$q$.

By choosing $A=B=\01^n$ (so $c=0$), $\varepsilon$ a small constant, $q=C\sqrt{n}$ for small constant~$C$, we obtain a lower bound of $q=\Omega(\sqrt{n})$ for the bounded-error quantum communication complexity of the two-Alice version of Boolean Hidden matching, because a protocol that learns $b$ (with success probability $\geq 2/3$) from $f_1({\bf x}^1),f_2({\bf x}^2),\mathrm{M}$ would be able to (with large probability) distinguish $Z_{\mathrm{M}}({\bf x}^1,{\bf x}^2)$ from uniformly random, violating Theorem~\ref{th:mainlowerbound}. Here the two Alices (and Bob) are allowed to share classical randomness but not entanglement. Shared randomness is not dealt with by Theorem~\ref{th:mainlowerbound} itself, but it is easy to see that shared randomness between the two Alices and/or Bob cannot improve their success probability under a fixed input distribution, since one can always fix the ``best'' shared random string w.r.t.\ that input distribution.

\medskip
\noindent {\bf Proof of part (1) of Theorem~\ref{theo3}:} 
By choosing $c\leq q=C\sqrt{\varepsilon n}$ for sufficiently small constant $C$, we also obtain the proof that $Z({\bf x}^1,{\bf x}^2,\mathrm{M})$ is an $\varepsilon$-secure strong seeded two-source extractor against two unentangled quantum adversaries who store $q=O(\sqrt{\varepsilon n})$ qubits about ${\bf x}^1$ and ${\bf x}^2$, respectively, assuming ${\bf x}^1,{\bf x}^2$ have min-entropy at least $n-c=n-O(\sqrt{\varepsilon n})$. Our Theorem~\ref{th:mainlowerbound} only handles the special case where the distributions of ${\bf x}^1,{\bf x}^2$ are so-called flat sets (uniform over sets $A,B$ of size $\geq 2^{n-c}$, respectively), but Chor and Goldreich~\cite{Chor1988} showed that every distribution of min-entropy $\geq n-c$ is a convex combination of such flat sets, so our proof extends immediately to the general case.

\end{document}